  \numberwithin{equation}{section}
\newtheorem{thm}{Theorem}[section]
\newtheorem{cor}[thm]{Corollary}
\newcommand{\be}{\begin{equation}}
\newcommand{\ee}{\end{equation}}
\newcommand{\ba}{\begin{array}}
\newcommand{\ea}{\end{array}}
\renewcommand{\d}{\delta}
\newcommand{\bg}{\begin{gathered}}
\newcommand{\eg}{\end{gathered}}
\renewcommand{\a}{\alpha}
\renewcommand{\l}{\lambda}
\newcommand{\gauss}[2]{\genfrac{[}{]}{0pt}{}{#1}{#2}_q}
\newtheorem{rem}[thm]{Remark}
\newcommand{\bea}{\begin{eqnarray}}
\newcommand{\eea}{\end{eqnarray}}
\newcommand{\ep}{\varepsilon}
\begin{document}

\title{Asymptotics of partition functions in a fermionic matrix model  and of related $q$-polynomials}
\author{Dan Dai, Mourad E. H. Ismail and Xiang-Sheng Wang}
\date{}

\maketitle

\begin{abstract}
  In this paper, we study asymptotics of the thermal partition
  function of a model of quantum mechanical fermions
  with matrix-like index structure and quartic interactions. This
  partition function is given explicitly by a Wronskian
  of the Stieltjes-Wigert polynomials. Our asymptotic results involve
  the theta function and its derivatives.
  We also develop a new  asymptotic method for general
  $q$-polynomials.
\end{abstract}



\noindent 2010 \textit{Mathematics Subject Classification}: 33D90, 41A60.

\noindent \textit{Keywords and phrases}: partition function, matrix models, Stieltjes-Wigert polynomials, theta function, asymptotics.

\section{Introduction and statement of results}

In the past few decades, matrix models have attracted a lot of research interests due to their close relations and various applications
in many areas of mathematics and physics; for example, see \cite{Ox-handbook,Marino-book}. Quite recently, to better understand the physics of
a large number of fermionic degrees of freedom subject to non-local interactions, Anninos and Silva \cite{Ann:Sil} studied models
of quantum mechanical fermions with matrix-like index structure. Given $L,N \in \mathbb{N}$, they considered a fermionic matrix model consisting of $NL$ complex fermions $\{ \psi^{iA}, \bar{\psi}^{Ai}\}$ with $i = 1, \cdots, N$ and
$A = 1,\cdots, L$. Note that the indices $i$ and $A$ transform in the bifundamental of a $U(N) \times U(L)$ symmetry. They showed that the thermal partition function is given by
\begin{equation} \label{partition-AS}
  \widetilde{Z}_{L \times N} = \mathcal{Q} \int \prod_{i<j} \sinh^2 \left( \frac{\mu_i - \mu_j}{2} \right) \prod_{i=1}^N \cosh^L \left(\frac{\mu_i}{2}\right)   e^{-L\tilde{\gamma} \mu_i^2}
  \prod_{i=1}^N d\mu_i,
\end{equation}
where $\tilde{\gamma}>0$ is a positive parameter and the normalization constant $\mathcal{Q}$ is
\begin{equation}
  \mathcal{Q}= 2^{-L} \int \prod_{i<j} \sinh^2 \left( \frac{\mu_i - \mu_j}{2} \right) \prod_{i=1}^N e^{-L\tilde{\gamma} \mu_i^2}
  \prod_{i=1}^N d\mu_i.
\end{equation}
It is interesting to point out that the sinh term in the above integrals also appears in
the study of matrix models in Chern-Simons-matter theories; for example, see \cite{Marino-book,Tie2016}.

Later, Tierz \cite{Tie2017} realized that the partition function in \eqref{partition-AS} can be written explicitly as a Wronskian
of the Stieltjes-Wigert polynomials. Let the constant $\widehat{C}$ be given as
\begin{equation*}
  \widehat{C} =  2^{N(N-1)-NL} \exp\left( - \frac{N^3}{4L \tilde{\gamma}} - \frac{3NL}{16 \tilde{\gamma}} - \frac{N^2}{2 \tilde{\gamma}} \right),
\end{equation*}
Tierz showed that
\begin{equation} \label{eqdefZ}
  \widehat{Z}_{L \times N} = \frac{\widetilde{Z}_{L\times N}}{2^L \widehat{C}} = \frac{(-1)^{LN}}{\prod_{j=0}^{L-1} j!}
\begin{vmatrix}
S_N(\l) & S_{N+1}(\l) & \cdots & S_{N+L-1}(\l) \\
S_N^{\prime}(\l) & S_{N+1}^{\prime}(\l) & \cdots &
S_{N+L-1}^{\prime}(\l) \\
\vdots & \vdots & \vdots & \vdots \\
S_N^{(L-1)}(\l) & S_{N+1}^{(L-1)}(\l) & \cdots &
S_{N+L-1}^{(L-1)}(\l)
\end{vmatrix},
\end{equation}
where the spectral parameter is
\begin{equation}
  \l = -q^{-N-L/2}.
\end{equation}
Here $q = \exp(- \frac{1}{2 \tilde{\gamma}L}) \in (0,1)$ and $S_n(x)$ is the monic Stieltjes-Wigert polynomial
\bea
\label{eqSW}
S_n(x) = (-1)^n q^{-n^2-n/2}\sum_{k=0}^n\gauss{n}{k}  q^{k^2+k/2}
(-x)^k;
\eea
see \cite{Ismbook, Koe:Swa}. After studying the partition function $\widehat{Z}_{L \times N}$ for some finite $L$ and $N$, Tierz \cite{Tie2017} raised the question of analyzing its large $N$ limit. He discussed
briefly the case $L=1$ and pointed out a connection to the
Rogers-Ramanujan identities and possibly to their $m$ version
of Garrett, Ismail and Stanton \cite{Gar:Ism:Sta}. However, he did not
prove any asymptotic results rigorously.

In this paper, we further develop an asymptotic approach of  Wang and Wong \cite{Wang:Wong}, to find the asymptotics of $\widehat{Z}_{L\times N}$
as $N \to \infty$ for general $L\in\mathbb{N}$. Our asymptotic technique applies to general $q$-polynomials, which are not even required to be orthogonal. 
To express our results, we need the following notation for the
theta function:
\bea \label{theta-q-def}
\Theta(z):=\sum_{k=-\infty}^\infty q^{k^2}z^k;
\eea
see Whittaker and Watson \cite[p. 463]{Whit:Wat}. For convenience, let us also introduce the function related to the derivatives of the theta function as follows:
\bea \label{theta-qj-def}
\Theta_j(z):=z^j\Theta^{(j)}(z)=\sum_{k=-\infty}^\infty q^{k^2}z^k(-k)_j(-1)^j.
\eea
It is easy to see that, when  $j =0$, the above formula reduce to the theta function in \eqref{theta-q-def}; and when $j=1,$ we have
\bea
\Theta_1(z)=z\Theta'(z)=\sum_{k=-\infty}^\infty k\; q^{k^2}z^k.
\eea
Now, we are in a position  to state one of our main results in the following theorem.
\begin{thm} \label{thm-main}
  Let $q\in(0,1)$, $L,N \in \mathbb{N}$, $m=\lfloor N/2\rfloor$ and $\alpha=2m-N$. With the partition function $\widehat{Z}_{L\times N}$ defined in \eqref{eqdefZ}, we have, for all $L \in \mathbb{N}$,
\bea \label{Zln-asy}
q^{{5LN^2\over4}+{L^2N\over2}}\widehat{Z}_{L\times N}\sim{q^{ \frac{L(L-\a-1)^2}{4} }\over(q;q)_\infty^L\prod_{j=0}^{L-1}j!}\det(R), \quad \textrm{as } N \to \infty,
\eea
where $R$ is an $L\times L$ matrix with entries involving functions $\Theta_j(z)$ in \eqref{theta-qj-def} as follows
\bea
R_{ij}=\Theta_i(q^{\a-j- \frac{L-1}{2} } )=\sum_{k=-\infty}^\infty q^{k^2}q^{k \left(\a-j-\frac{L-1}{2} \right)}(-k)_i(-1)^i
\eea
for $0\le i,j\le L-1$.
\end{thm}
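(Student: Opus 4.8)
The plan is to reduce the Wronskian \eqref{eqdefZ} to a determinant of bilateral $q$-series and to match each such series against a derivative-theta $\Theta_i$ from \eqref{theta-qj-def}. Differentiating the explicit expansion \eqref{eqSW} $i$ times and evaluating at $\l=-q^{-N-L/2}$, one finds $\frac{d^i}{dx^i}(-x)^k\big|_{x=\l}=(-1)^i\,\frac{k!}{(k-i)!}\,q^{-(N+L/2)(k-i)}$, and the falling factorial $k!/(k-i)!=k(k-1)\cdots(k-i+1)$ equals $(-k)_i(-1)^i$, exactly the weight appearing in $\Theta_i$. Thus $S_{N+j}^{(i)}(\l)$ equals, up to a column factor $(-1)^{N+j}q^{-(N+j)^2-(N+j)/2}$ and a row factor $(-1)^i q^{(N+L/2)i}$, the sum $\sum_k \gauss{N+j}{k} q^{k^2+k/2-(N+L/2)k}\,(-k)_i(-1)^i$. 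First I would pull these factors out of the determinant (they feed only into the eventual normalization) and then center the sum by setting $k=m+\ell$; with $N=2m-\a$ the quadratic exponent splits as $-m^2+m(\a-\tfrac{L-1}{2})+\ell^2+\ell(\a-\tfrac{L-1}{2})$, the $\ell$-independent piece factoring out as a further common scalar.

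Next I would carry out row operations. Since $(-k)_i(-1)^i$ is a monic polynomial of degree $i$ in $k$, the families $\{(-k)_i(-1)^i\}_{i=0}^{L-1}$ and $\{(-\ell)_i(-1)^i\}_{i=0}^{L-1}$ (with $\ell=k-m$) are related by a unitriangular change of basis, so replacing the former by the latter leaves the determinant unchanged. Each entry then becomes $\sum_\ell \gauss{N+j}{m+\ell}\,q^{\ell^2}\,q^{\ell(\a-(L-1)/2)}\,(-\ell)_i(-1)^i$. As $N\to\infty$ the summation range opens up to all of $\Z$ and $\gauss{N+j}{m+\ell}\to 1/(q;q)_\infty$, so each entry converges to $\frac{1}{(q;q)_\infty}\Theta_i(q^{\a-(L-1)/2})$, which is manifestly independent of $j$.

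This $j$-independence is precisely the main obstacle: at leading order every column coincides, so the determinant degenerates to zero, and the true leading contribution is hidden in the subleading corrections that distinguish the columns. To expose them I would expand the $q$-binomial coefficient exactly, writing $\gauss{N+j}{m+\ell}=\frac{1}{(q;q)_\infty}\,\frac{(q^{m+\ell+1};q)_\infty\,(q^{m-\a+j-\ell+1};q)_\infty}{(q^{2m-\a+j+1};q)_\infty}$ and applying Euler's identity $(a;q)_\infty=\sum_{s\ge0}\frac{(-1)^s q^{\binom s2}}{(q;q)_s}a^s$ to the $j$-dependent factor. This produces terms proportional to $q^{s(m-\a+1)}\,q^{sj}\,q^{-s\ell}$; the $q^{-s\ell}$ shifts the theta argument while the $q^{sj}$ supplies a Vandermonde structure, so that column $j$ takes the form $\sum_{s\ge0}q^{sj}\,\kappa_s\,\Theta_i(q^{\a-(L-1)/2-s})$ plus terms of higher order in $q^m$. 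Treating the matrix as the product of this coefficient matrix with the Vandermonde matrix $(q^{sj})$ and invoking the Cauchy--Binet formula, the minor with minimal total $q^m$-exponent, namely $s\in\{0,1,\dots,L-1\}$, dominates; it reproduces exactly $\det(R)$ with $R_{ij}=\Theta_i(q^{\a-j-(L-1)/2})$, multiplied by $\prod_{0\le a<b\le L-1}(q^b-q^a)$ and by $\prod_{a=0}^{L-1}\kappa_a$.

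Finally I would assemble the constants. A pleasant cancellation occurs: the Vandermonde against the Euler factors $\prod_a 1/(q;q)_a$ collapses to a pure power of $q$, because the $(1-q^d)$ factors match up exactly, while the $\prod_{j=0}^{L-1}j!$ of \eqref{eqdefZ} survives untouched to the right-hand side. Collecting the pulled-out column and row factors together with the scalars $q^{-m^2+\cdots}$ and the power just produced, and substituting $m=(N+\a)/2$, should yield precisely the normalizing power $q^{5LN^2/4+L^2N/2}$ on the left and $q^{L(L-\a-1)^2/4}$ on the right. I expect the genuine difficulty to be analytic rather than algebraic: one must show, uniformly in $N$, that the opened-up bilateral series and all their polynomial-weighted analogues are captured by the theta functions with controlled remainders, and that the non-minimal Cauchy--Binet minors are truly negligible. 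This simultaneous uniform error control, for the series and for the factors $(-\ell)_i(-1)^i$ at once, is exactly what the Wang--Wong method \cite{Wang:Wong} is designed to furnish.
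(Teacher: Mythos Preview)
Your approach is correct and reaches the same endpoint, but the route differs from the paper's in one essential respect.  After pulling out the row and column scalars you keep each entry as a single sum over $k$ (or $\ell=k-m$), notice that at leading order every column collapses to the same $\Theta_i$, and then resolve this degeneracy by expanding the $j$--dependent factor $(q^{m-\a+j-\ell+1};q)_\infty$ via Euler's identity and invoking Cauchy--Binet against the Vandermonde $(q^{sj})$.  The paper never passes through a degenerate matrix: it first uses multilinearity to write $\det(S)$ as a sum over tuples $(k_0,\dots,k_{L-1})$, and for each tuple evaluates the determinant $\det\bigl(\gauss{N+j}{k_i}\bigr)_{i,j}$ \emph{exactly} by a Krattenthaler-type reduction to the Vandermonde $\prod_{i<j}(q^{-k_j}-q^{-k_i})$.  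This bakes the shift $-jk_i$ into the exponent from the outset, so the argument $q^{\a-j-(L-1)/2}$ of $\Theta_i$ appears immediately rather than emerging from a dominant Cauchy--Binet minor.

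What each buys: the paper's evaluation of the $q$--binomial determinant is a polynomial identity, so no asymptotics are needed until the very last step, and one never has to argue that subleading corrections from the remaining factors $(q^{m+\ell+1};q)_\infty$ and $1/(q^{2m-\a+j+1};q)_\infty$ are harmless after Cauchy--Binet.  Your approach is more flexible (it does not rely on a closed-form determinant identity and would adapt to coefficients other than $q$--binomials), but you must check that those extra factors, whose first corrections are of order $q^m$ and $q^{2m}$ and are either $j$--independent or carry their own $q^{s''j}$, do not compete with the dominant minor $s\in\{0,\dots,L-1\}$; they do not, since the $j$--independent $O(q^m)$ pieces only shift the theta argument and hence duplicate an adjacent column, while the $j$--dependent $O(q^{2m})$ pieces are suppressed relative to the $O(q^{m})$ step between consecutive $s$--values.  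Your observation that $\prod_{a<b}(1-q^{b-a})=\prod_a(q;q)_a$, so that the Vandermonde cancels the Euler denominators up to a sign and a pure power of $q$, is exactly what makes the final constant match.
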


For simplicity, we use the symbol $\sim$ to denote asymptotic equal; namely, we write $a(N)\sim b(N)$ as $N\to\infty$ if 
$$\lim_{N\to\infty}{a(N)\over b(N)}=1.$$
When $L=1,2$, the asymptotic results in Theorem \ref{thm-main} may be put into a more concrete form.

\begin{cor}\label{cor1}
For $L =1,2$, we have
\bea \label{Zln-asy-l=1}
q^{ \frac{5N^2}{4}+ \frac{N}{2}}\widehat{Z}_{1\times N} & \sim & {q^{ \frac{\a^2}{4} } \Theta(q^{\a})\over(q;q)_\infty}
={q^{\frac{\a^2}{4} }(-q^{1+\a};q^2)_\infty(-q^{1-\a};q^2)_\infty\over(q;q^2)_\infty}
\\ \notag
& = & \begin{cases} \displaystyle{(-q;q^2)_\infty^2\over(q;q^2)_\infty}, & \textrm{if $N$ is even;} \\ \quad \\ \displaystyle{q^{\frac{1}{4} }(-1;q^2)_\infty(-q^2;q^2)_\infty\over(q;q^2)_\infty}, & \textrm{if $N$ is odd}
\end{cases}
\eea
and
\bea \label{Zln-asy-l=2}
&&q^{ \frac{5N^2}{2}+2N}\widehat{Z}_{2\times N}
\sim {q^{ \frac{(\a-1)^2}{2} }\over4}(-q;q)_\infty(-1;q)_\infty
\\&& \quad \times \left[-(q;q)_\infty^2(q;q^2)_\infty^2(q^{\a- \frac{1}{2} };q)_\infty(q^{ \frac{3}{2}-\a};q)_\infty
+(-q^{\a- \frac{1}{2} };q)_\infty(-q^{ \frac{3}{2} -\a};q)_\infty \right].\notag
\eea
\end{cor}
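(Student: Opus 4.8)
The plan is to obtain both identities by specializing Theorem~\ref{thm-main} at $L=1,2$ and collapsing $\det(R)$ with the Jacobi triple product
\be
\Theta(z)=\sum_{k=-\infty}^\infty q^{k^2}z^k=(q^2;q^2)_\infty(-qz;q^2)_\infty(-q/z;q^2)_\infty,
\ee
together with the observation that $m=\lfloor N/2\rfloor$ forces $\alpha=0$ for even $N$ and $\alpha=-1$ for odd $N$. For $L=1$ the prefactor exponent $\tfrac{L(L-\alpha-1)^2}{4}$ becomes $\alpha^2/4$, the matrix collapses to the scalar $R_{00}=\Theta(q^\alpha)$, and $(q;q)_\infty^L\prod_{j=0}^{L-1}j!=(q;q)_\infty$, so the theorem gives the first equality in \eqref{Zln-asy-l=1}. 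Writing $\Theta(q^\alpha)$ as a triple product and using $(q;q)_\infty=(q;q^2)_\infty(q^2;q^2)_\infty$ yields the middle (product) form, and substituting $\alpha=0$ and $\alpha=-1$ produces the two branches.

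For $L=2$ the prefactor exponent is $(\alpha-1)^2/2$, $(q;q)_\infty^L\prod_{j}j!=(q;q)_\infty^2$, and $R$ is the $2\times2$ matrix with first row $(\Theta(a),\Theta(b))$ and second row $(\Theta_1(a),\Theta_1(b))$, evaluated at $a:=q^{\alpha-1/2}$ and $b:=q^{\alpha-3/2}=a/q$; hence $\det(R)=\Theta(a)\Theta_1(b)-\Theta(b)\Theta_1(a)$. I would factor this as
\be
\det(R)=\Theta(a)\Theta(b)\left[\frac{\Theta_1(b)}{\Theta(b)}-\frac{\Theta_1(a)}{\Theta(a)}\right]=\Theta(a)\Theta(b)\,[\Phi(a/q)-\Phi(a)],
\ee
where $\Phi(z):=z\,(\log\Theta)'(z)=\Theta_1(z)/\Theta(z)$. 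Applying the triple product to each factor and merging the resulting $q^2$-products into $q$-products gives the clean evaluation $\Theta(a)\Theta(b)=(q^2;q^2)_\infty^2(-q^{\alpha-1/2};q)_\infty(-q^{3/2-\alpha};q)_\infty$, which already accounts for the $(-q^{\alpha-1/2};q)_\infty(-q^{3/2-\alpha};q)_\infty$ factor appearing in \eqref{Zln-asy-l=2}.

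It remains to identify $\Phi(a/q)-\Phi(a)$. After reducing the constant in \eqref{Zln-asy-l=2} via $(q;q)_\infty(-q;q)_\infty=(q^2;q^2)_\infty$ and $(-1;q)_\infty=2(-q;q)_\infty$ (so that $\tfrac14(q;q)_\infty^2(-q;q)_\infty(-1;q)_\infty=\tfrac12(q^2;q^2)_\infty^2$), the target \eqref{Zln-asy-l=2} becomes equivalent to the functional identity
\be
\Phi(a/q)-\Phi(a)=\frac12-\frac12\,\frac{(q;q)_\infty^2(q;q^2)_\infty^2\,(a;q)_\infty(q/a;q)_\infty}{(-a;q)_\infty(-q/a;q)_\infty}.
\ee
I would prove this by differentiating the triple product logarithmically to get $\Phi(z)=\sum_{j\ge0}\bigl[\tfrac{q^{2j+1}z}{1+q^{2j+1}z}-\tfrac{q^{2j+1}}{z+q^{2j+1}}\bigr]$, forming the difference $\Phi(a/q)-\Phi(a)$ (whose terms regroup into a single alternating Lambert-type series in $a$), and matching that series against the right-hand side.

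The main obstacle is precisely this last identification. An equivalent route expands $\det(R)=\sum_{m,n}(n-m)q^{m^2+n^2}a^mb^n$ and, after the substitution $s=m+n,\ d=n-m$ and a split by the common parity of $s,d$, factors into ordinary theta series and a single derivative theta series $\sum_k k\,q^{2k^2+k}$ in the nome $q^2$ (using the reflection $\sum_k k\,q^{2k^2-k}=-\sum_k k\,q^{2k^2+k}$). Either way the genuine difficulty is that the derivative theta series has no product form in isolation, so the argument must show that it cancels against a prescribed difference of ordinary theta series, leaving the product $(q;q)_\infty^2(q;q^2)_\infty^2(a;q)_\infty(q/a;q)_\infty$; this is a Wronskian-type theta identity rather than a routine simplification. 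As consistency checks I would let $q\to0$ with $a$ fixed, where both sides of the displayed identity tend to $a/(1+a)$ (pinning down the constant $\tfrac12$), and verify the even and odd cases $\alpha\in\{0,-1\}$ directly against \eqref{Zln-asy-l=2}.
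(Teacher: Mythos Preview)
Your treatment of $L=1$ is correct and matches the paper's: specialize Theorem~\ref{thm-main}, apply the Jacobi triple product, and read off the two parity cases.

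For $L=2$ you reduce correctly to $\det(R)=\Theta(a)\Theta_1(b)-\Theta(b)\Theta_1(a)$ with $a=q^{\alpha-1/2}$, $b=a/q$, and your evaluation of $\Theta(a)\Theta(b)$ as $(q^2;q^2)_\infty^2(-q^{\alpha-1/2};q)_\infty(-q^{3/2-\alpha};q)_\infty$ is right. But you yourself flag the gap: you stop at the functional identity for $\Phi(a/q)-\Phi(a)$ and say only that you ``would'' match the Lambert series, conceding that ``the genuine difficulty'' remains. Neither the logarithmic-derivative route nor the parity-splitting route is carried out, so as written the $L=2$ proof is incomplete.

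The paper closes exactly this gap by a different maneuver: it rewrites $\Theta(q^{\alpha-3/2})=\vartheta_3(a;q)$, $\Theta(q^{\alpha-1/2})=q^{(1-\alpha)/2}\vartheta_2(a;q)$ (with $a=\pi\tau(\alpha/2-3/4)$, $q=e^{\pi i\tau}$), and similarly for the derivatives, so that the determinant becomes a constant multiple of $\vartheta_2(a)\vartheta_3'(a)-\vartheta_3(a)\vartheta_2'(a)+i\vartheta_3(a)\vartheta_2(a)$. The first piece is then handled by the \emph{classical} Wronskian identity
\[
\vartheta_2(a;q)\vartheta_3'(a;q)-\vartheta_3(a;q)\vartheta_2'(a;q)=\vartheta_4(0;q)^2\,\vartheta_4(a;q)\vartheta_1(a;q),
\]
available in Whittaker--Watson and DLMF Chapter~20, after which everything collapses via the triple product. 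Your target identity for $\Phi(a/q)-\Phi(a)$ is precisely a repackaging of this classical Wronskian relation; the missing idea in your proposal is to recognize it as such (or equivalently to pass to the $\vartheta_j$ notation where it is a standard quoted result) rather than attempting to prove it from scratch via Lambert series.
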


\begin{rem}
  In \cite{Tie2017}, Tierz conjectured that
\bea
\lim_{N\to\infty}q^{6N^2+ \frac{21N}{2} }\widehat{Z}_{1\times N}={1\over(q;q^5)_\infty^2(q^4;q^5)_\infty^2}-{1\over(q^2;q^5)_\infty^2(q^3;q^5)_\infty^2}.
\eea
Here, our asymptotic formula \eqref{Zln-asy-l=1} is different from what Tierz conjectured  in \cite{Tie2017}.
\end{rem}

\begin{rem}
  It will be of interest to evaluate the determinant of the matrix
$R$ in Theorem \ref{thm-main} for general $L$. We strongly
believe that it has a simple close form.
\end{rem}

The rest of this paper is organized as follows. In Section \ref{sec:asy-qop},
we formulate a new technique to derive the asymptotics of
$q$-polynomials. This technique is applicable to all classical
 $q$-polynomials which are orthogonal on unbounded intervals.
 We also prove an asymptotic  symmetry  property of
zeros of $q$-polynomials with positive zeros. This property states
that the product of the $k$-th largest zero and the $k$-th smallest
zero is asymptotically independent of $k$.
In the case of the Stieltjes-Wigert polynomials this property is
known, see \cite{Ism:Zhang2007,ZWang:Wong2006}. Based on the general asymptotic results in Section \ref{sec:asy-qop},
the proof of Theorem \ref{thm-main} and Corollary \ref{cor1} are done in
Section \ref{sec:proof-mainthm}. We also give another proof for the particular case $L=1$ at the end of this section.
In Section \ref{sec:asy-qop-outer} we continue the development of a new asymptotic
technique started in Section \ref{sec:asy-qop}
by considering the asymptotics in the non-oscillatory range.


\section{Asymptotics of $q$-polynomials and symmetry of zeros} \label{sec:asy-qop}

\subsection{Asymptotics of $q$-polynomials in the oscillatory interval}

To prove Theorem \ref{thm-main},  we actually  develop a new
asymptotic technique to study asymptotics of general
$q$-polynomials. Consider the following general
$q$-polynomials with real coefficients
\bea \label{pn-def}
  P_n(x)=\sum_{k=0}^nq^{k^2}f_n(k)(-x)^k,
\eea
and the related derivative functions
\bea \label{pn-j-def}
P_{n,j}(x):=x^jP_n^{(j)}(x)=\sum_{k=0}^nq^{k^2}f_n(k)(-x)^k(-k)_j(-1)^j
\eea
with $j\in \mathbb{N}.$ Define
\bea \label{X-jm-def}
X_{j,m}(x):=\sum_{k=-\infty}^\infty q^{k^2}x^k(-k-m)_j(-1)^j.
\eea
Note that this function is related to the functions $\Theta(z)$ and $\Theta_j(z)$ in \eqref{theta-q-def} and
\eqref{theta-qj-def} as $X_{0,m}(x)=\Theta(x)$ and $X_{j,0}(x)=\Theta_j(x)$.

Next, we state our asymptotic results for  general $q$-polynomials in \eqref{pn-def}.

\begin{thm}\label{thm-Pnj}
Assume that $f_n(k)$ is uniformly bounded for $n\ge0$ and $0\le k\le n$; moreover, for some $l\in(0,1)$ and $0<\d<\min(l,1-l)$,
\begin{equation}
  \sup_{n(l-\d)\le k\le n(l+\d)}|f_n(k)-1|\le\ep(n,l,\d)=o(n^{-j}), \quad \textrm{with } j \in \mathbb{N},
\end{equation}
as $n\to\infty$.
  Let $m=\lfloor nl\rfloor$, $d=\lfloor n\d \rfloor$, and $M$ be a fixed large number. Then, for the functions $P_{n,j}(x)$ given in \eqref{pn-j-def},
  we have
\bea
\qquad P_{n,j}(q^{-2m}y)=q^{-m^2}(-y)^m[X_{j,m}(-y)+O(n^j\ep(n,l,\d))+O(q^{d^2}M^dn^j)],
\eea
uniformly for $1/M\le|y|\le M$.
\end{thm}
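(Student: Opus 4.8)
The plan is to substitute $x=q^{-2m}y$ directly into the defining series \eqref{pn-j-def} and re-index. Writing $k^2-2mk=(k-m)^2-m^2$ and setting $r=k-m$, one obtains
\[
P_{n,j}(q^{-2m}y)=q^{-m^2}(-y)^m\sum_{r=-m}^{n-m}q^{r^2}(-y)^r f_n(m+r)\,(-(m+r))_j(-1)^j .
\]
The key observation is that, term by term, this summand coincides with that of $X_{j,m}(-y)$ in \eqref{X-jm-def} (with summation index $r$, since $(-(m+r))_j=(-r-m)_j$), the only two discrepancies being the extra factor $f_n(m+r)$ and the truncation of the range to $-m\le r\le n-m$ in place of all of $\Z$. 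Hence, after dividing by $q^{-m^2}(-y)^m$, it suffices to estimate
\[
\Delta:=\sum_{r=-m}^{n-m}q^{r^2}(-y)^r f_n(m+r)(-(m+r))_j(-1)^j-X_{j,m}(-y),
\]
and the theorem will follow once I show $\Delta=O(n^j\ep(n,l,\d))+O(q^{d^2}M^d n^j)$.

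I would split $\Delta$ according to whether $|r|\le d$ or $|r|>d$, splitting the defining series of $X_{j,m}(-y)$ the same way. On the central block $|r|\le d$ the two sums have identical index sets, so their difference is $\sum_{|r|\le d}q^{r^2}(-y)^r\,(f_n(m+r)-1)\,(-(m+r))_j(-1)^j$. Because $m=\lfloor nl\rfloor$ and $d=\lfloor n\d\rfloor$, the indices $k=m+r$ with $|r|\le d$ lie in $[n(l-\d),n(l+\d)]$ up to at most one boundary value, so the hypothesis gives $|f_n(m+r)-1|\le\ep(n,l,\d)$ there. Combining this with $|(-y)^r|\le M^{|r|}$, the bound $|(-(m+r))_j|\le n^j$ (valid since $0\le m+r\le n$ and $l+\d<1$), and the convergence of $\sum_r q^{r^2}M^{|r|}$, this block contributes $O(n^j\ep(n,l,\d))$. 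The single possibly-uncontrolled boundary term at $r=-d$ carries the factor $q^{d^2}$, hence is bounded by $q^{d^2}M^d\,n^j$ up to a constant and is absorbed into the second error term.

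For the tails $|r|>d$ — the truncated tail of the $P_{n,j}$-series and the genuine tail $|k|>d$ of $X_{j,m}(-y)$ — I would exploit the super-exponential decay of $q^{r^2}$ via the factorization $q^{r^2}M^{|r|}=q^{d^2}M^{d}\,q^{s^2+2ds}M^{s}$ with $s=|r|-d\ge1$. Since $q^{2ds}\le1$, summing the resulting series $\sum_{s\ge1}q^{s^2}M^{s}(\cdots)$ converges, and the polynomial factor $|(-(m+r))_j|\le C(n+|r|)^j$ together with the uniform bound $|f_n|\le B$ yields a tail estimate of order $q^{d^2}M^{d}n^j$. Collecting the central block and the two tails gives $\Delta=O(n^j\ep(n,l,\d))+O(q^{d^2}M^dn^j)$; all estimates are uniform for $1/M\le|y|\le M$ since only the upper bound $|y|\le M$ enters. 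The main obstacle is precisely this tail analysis: one must verify that the $q^{r^2}$ decay genuinely beats both the geometric growth $M^{|r|}$ and the polynomial factor of size $n^j$, and that the mismatch between the central block $|r|\le d$ and the exact interval $[n(l-\d),n(l+\d)]$ on which the hypothesis controls $f_n$ costs no more than the allowed $O(q^{d^2}M^dn^j)$.
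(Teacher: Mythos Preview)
Your proposal is correct and follows essentially the same approach as the paper: shift the index by $m$, split into a central block of width $\sim 2d$ and two tails, control the central block via the hypothesis $|f_n-1|\le\ep(n,l,\d)$ together with $|(-k)_j|\le n^j$, and kill the tails using the factorization $q^{r^2}=q^{d^2}q^{(|r|-d)^2+2d(|r|-d)}$ against the uniform bound on $f_n$. The only organizational difference is that the paper first splits the $P_{n,j}$-sum into tail and central pieces $I_1+I_2$ and then extracts $X_{j,m}(-y)$ from $I_2$, whereas you subtract $X_{j,m}(-y)$ at the outset and estimate the difference $\Delta$ directly; your explicit remark about the possible one-index mismatch between $[m-d,m+d]$ and $[n(l-\d),n(l+\d)]$ is a point the paper leaves implicit.
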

\begin{proof}
  By a shift of variable $k\to k+m$, we have
  \bea\notag
  P_{n,j}(q^{-2m}y)=\sum_{k=-m}^{n-m}q^{k^2-m^2}f_n(k+m)(-y)^{k+m}(-k-m)_j(-1)^j=q^{-m^2}(-y)^m(I_1+I_2),
  \eea
  where
  \bea\notag
  I_1=\sum_{k=-m}^{-d}q^{k^2}f_n(k+m)(-y)^k(-k-m)_j(-1)^j+\sum_{k=d}^{n-m}q^{k^2}f_n(k+m)(-y)^k(-k-m)_j(-1)^j,
  \eea
  and
  \bea\notag
  I_2=\sum_{k=-d+1}^{d-1}q^{k^2}[f_n(k+m)-1](-y)^k(-k-m)_j(-1)^j+\sum_{k=-d+1}^{d-1}q^{k^2}(-y)^k(-k-m)_j(-1)^j.
  \eea
  Note that $0<(-k-m)_j(-1)^j<(k+m+j)^j<(d+m+j)^j(1+k-d)^j$ for all $k\ge d$.
  It then follows that
  \bea\notag
  \sum_{k=d}^\infty q^{k^2}M^k(-k-m)_j(-1)^j\le\sum_{k=0}^\infty q^{k^2+2kd+d^2}M^{k+d}(1+k)^j(d+m+j)^j=O(q^{d^2}M^dn^j),
  \eea
  which implies that $I_1=O(q^{d^2}M^dn^j)$. Furthermore, it is easily seen that $I_2= X_{j,m}(-y)+O(n^j\ep(n,l,\d))+O(q^{d^2}M^dn^j)$.
  Consequently, we obtain
\bea\notag
P_{n,j}(q^{-2m}y)=q^{-m^2}(-y)^m[X_{j,m}(-y)+O(n^j\ep(n,l,\d))+O(q^{d^2}M^dn^j)].
\eea
  This completes the proof.
\end{proof}

To illustrate the application of the above theorem, we provide asymptotics of the Stieltjes-Wigert polynomials with scaled variable $x=q^{-nt}u$, where $t\in(0,2)$. 
Let $l=t/2$ and $m=\lfloor nl\rfloor$. We obtain from \eqref{eqSW} that
\bea\notag
S_n(q^{-nt}u)={(-1)^nq^{-n^2-n/2}\over(q;q)_n}\sum_{k=0}^nq^{k^2}f_n(k)(-q^{-2m}y)^k,
\eea
where
\bea\notag
f_n(k)=(q;q)_n\gauss{n}{k},
\eea
and
$y=q^{-nt+2m+1/2}u$.
For any $0<\d<\min(l,1-l)$, we have
\bea\notag
|f_n(k)-1|&=&1-(q^{k+1};q)_{n-k}(q^{n-k+1};q)_k
\le q^{k+1}+\cdots+q^n+q^{n-k+1}+\cdots+q^n
\\\notag&\le& {q^{k+1}+q^{n-k+1}\over 1-q}
\le {1\over1-q}[q^{n(l-\d)}+q^{n(1-l-\d)}]
\eea
for all $n(l-\d)\le k\le n(l+\d)$.
It then follows from Theorem \ref{thm-Pnj} (with $j=0$) that
\bea
S_n(q^{-nt}u)=
{\Theta(-q^{-nt+2m+1/2}u)+O(q^{n(l-\d)}+q^{n(1-l-\d)})\over (-1)^n(q;q)_nq^{n^2-m^2+nmt+(n-m)/2}(-u)^{-m}}.
\eea

Similarly, Theorem \ref{thm-Pnj} also gives us asymptotic results for the $q^{-1}$-Hermite polynomials
\bea \label{q-hermite-def}
h_n(\sinh\xi)&=&\sum_{k=0}^n\gauss{n}{k}q^{k^2-nk}(-1)^ke^{(n-2k)\xi}
\notag\\&=&(-1)^nq^{n^2+n/2}e^{n\xi}S_n(q^{-n-1/2}e^{-2\xi})
\eea
and the $q$-Laguerre polynomials
\bea \label{q-lag-def}
L_n^{(\a)}(x;q)={(q^{\a+1};q)_n\over(q;q)_n}\sum_{k=0}^n\gauss{n}{k}q^{k^2+\a k}{(-x)^k\over(q^{\a+1};q)_k}.
\eea
For the $q^{-1}$-Hermite polynomials, let $\xi=-nt\ln q+\ln u$ with $t\in(-1/2,1/2)$. By choosing $f_n(k)=(q;q)_n\gauss{n}{k}$, we obtain
\bea
h_n(\sinh\xi)=
{\Theta(-q^{-n(1-2t)+2m}u^{-2})+O(q^{n(l-\d)}+q^{n(1-l-\d)})\over(-1)^m(q;q)_nq^{n^2t-m^2+nm(1-2t)}u^{2m-n}},
\eea
where $l=1/2-t$, $m=\lfloor nl\rfloor$, and $\d>0$ is any small positive number such that $\d<\min(l,1-l)$.
Regarding the $q$-Laguerre polynomials, for $t\in(0,2)$, by choosing $f_n(k)=(q^{\a+k+1};q)_{n-k}(q;q)_n\gauss{n}{k}$, we obtain
\bea
L_n^{(\a)}(q^{-nt}u;q)=
{\Theta(-q^{-nt+2m+\a}u)+O(q^{n(l-\d)}+q^{n(1-l-\d)})\over(q;q)_n^2q^{-m^2+nmt-\a m}(-u)^{-m}},
\eea
where $l=t/2$, $m=\lfloor nl\rfloor$, and $\d>0$ is any small positive number such that $\d<\min(l,1-l)$.


\subsection{Symmetry of zeros of $q$-polynomials}

It is a well-known fact that zeros of some classical $q$-orthogonal polynomials satisfy nice symmetric properties. Let $x_1<x_2<\cdots<x_n$ be the zeros of the Stieltjes-Wigert polynomial $S_n(x)$.
By \eqref{eqSW}, we have
\bea
S_n(q^{-2n-1}/x)=q^{-n^2-n/2}(-x)^{-n}S_n(x).
\eea
It is readily seen that
\bea
x_jx_{n+1-j}=q^{-2n-1}, \qquad j = 1, \cdots , n;
\eea
see also \cite[(2.8)]{Ism:Zhang2007} and \cite[(2.20)]{ZWang:Wong2006}. For the $q^{-1}$-Hermite polynomials in \eqref{q-hermite-def}, let the zeros be denoted as $\xi_1<\xi_2<\cdots\xi_n$. They also satisfy a symmetric relation as follows:
\bea
\xi_j+\xi_{n+1-j}=0.
\eea

Actually, a similar asymptotic symmetry property of polynomials zeros is satisfied for a general class of $q$-polynomials $P_n(x)$ in \eqref{pn-def},
where the coefficient $f_n(k)$ is uniformly bounded for $n\ge0$ and $0\le k\le n$. Then, for some $l\in(0,1)$ 
and $x=q^{-2nl}y$, we obtain from Theorem \ref{thm-Pnj}
\bea\label{Pn-asym}
P_n(x)\sim q^{-m^2}y^m\Theta(-q^{2(m-nl)}y),
\eea
where $m=\lfloor nl\rfloor$.
For each fixed $j=1,2,\cdots$, there exist a pair $y_j^\pm=q^{\pm(2j-1)-2(m-nl)}$ such that
$\Theta(-q^{2(m-nl)}y_j^\pm)=0$.
Consequently, for sufficiently large $n$, $P_n(x)$ has a pair of zeros $x_j^\pm\sim q^{\pm(2j-1)-2m}$; in particular, we have
\bea
x_j^+x_j^-\sim q^{-4m}.
\eea
We now apply the above result to the $q$-Laguerre polynomials in \eqref{q-lag-def} where $f_n(k)$ in \eqref{pn-def} is now given by
\bea\notag
f_n(k)={(q^{\a+1};q)_n(q;q)_n\over(q^{\a+1};q)_k}\gauss{n}{k}.
\eea
For any $l\in(0,1)$ and $0<\d<\min(l,1-l)$, we have
$$
\sup_{n(l-\d)\le k\le n(l+\d)}|f_n(k)-1|=O(q^{n(l-\d)}+q^{n(1-l-\d)}).
$$
Thus, for any fixed $l\in(0,1)$ and $j=1,2,3,\cdots,$
\bea\notag
L_n^{(\a)}(x;q)={1\over(q;q)_n^2}\sum_{k=0}^nq^{k^2}f_n(k)(-xq^{\a})^k.
\eea
has a pair of zeros $x_j^\pm\sim q^{\pm(2j-1)-2\lfloor nl\rfloor-\a}$.
This implies that for any integer $k\in(1,n)$ such that $k/n$ is bounded away from $0$ and $1$, $L_n^{(\a)}(x;q)$ has a zero
$x_k\sim q^{1-2k-\a}$; in particular, we get
\bea
x_kx_{n+1-k}\sim q^{-2n-2\a},~~n\to\infty.
\eea
Let us conduct numerical computation and choose $q=0.6,~\a=0.4$ and $n=20$. The values of $q^{2n+2\a}x_kx_{n+1-k}$ for $k=1,\cdots,10,$ are given below:
$$
0.45,0.725,0.852,0.917,0.952,0.972,0.983,0.989,0.993,0.994.
$$
We also take $q=0.5,~\a=0.7,~n=25$ and obtain the values of $q^{2n+2\a}x_kx_{n+1-k}$ for $k=1,\cdots,12,$ as follows:
$$
0.658,0.861,0.937,0.97,0.985,0.993,0.996,0.998,0.999,1.,1.,1.
$$
From the above computations, one can see that the asymptotic symmetry property is more significant with smaller $q$, larger $n$, or $k$ closer to $n/2$.

\section{Asymptotics of partition functions} \label{sec:proof-mainthm}

\subsection{Proof of Theorem \ref{thm-main}}

In this section we give a proof of Theorem \ref{thm-main} based on the general asymptotic results in Theorem \ref{thm-Pnj}.

\begin{proof}[Proof of Theorem \ref{thm-main}]
From the definition of $\widehat{Z}_{L\times N}$ in \eqref{eqdefZ}, we have
\bea\label{Z-S}
\widehat{Z}_{L\times N}=\det(S){(-1)^{LN}\over\prod_{j=0}^{L-1}j!}\prod_{j=0}^{L-1}(-1)^{N+j}q^{-(N+j)^2-(N+j)/2},
\eea
where $S$ is an $L\times L$ matrix with $ij$-th entry:
\begin{equation*}
  S_{ij}=\sum_{k_i=i}^{N+j}\gauss{N+j}{k_i}q^{k_i^2+k_i/2-(k_i-i)(N+L/2)}(-k_i)_i,
\qquad 0\le i,j\le L-1.
\end{equation*}
Similar as in the proof of Theorem \ref{thm-Pnj}, one can show that the main contribution of the sum of the right-hand side comes from the items with index $k_i$ close to $N/2$.
We may ignore the (exponentially small) items with indices $k_i<L$ or $k_i>N$, and obtain
\bea\notag
S_{ij}\sim\sum_{k_i=L}^N G_{ij}q^{k_i^2+k_i/2-(k_i-i)(N+L/2)}(-k_i)_i,
\eea
where $G_{ij}=\gauss{N+j}{k_i}$ is the $ij$-th entry of a matrix $G$.
It then follows that
\bea\label{S-G}
\qquad \det(S)\sim\sum_{L\le k_0,\cdots,k_{L-1}\le N} \det(G) q^{\sum_{i=0}^{L-1}k_i^2-k_i(N+L/2-1/2)+i(N+L/2)}\prod_{i=0}^{L-1}(-k_i)_i.
\eea
A simple calculation gives us
\bea\notag
\det(G)=\det(V)\prod_{i=0}^{L-1}{(q;q)_{N+i}\over(q;q)_{k_i}(q;q)_{N+L-1-k_i}},
\eea
where $V$ is a matrix with $ij$-th entry
\bea\notag
V_{ij}=\prod_{l=j+1}^{L-1}(1-q^{N-k_i+l}) ,
\qquad 0\le i,j\le L-1.
\eea
Here, when $j=L-1$, the empty product is understood to be 1. By row operations, the matrix $V$ can be transformed to a Vandermonde one with $ij$-th entry $q^{-jk_i}$, multiplied by certain constant factors. Indeed, one obtains
\bea\notag
\det(V)=q^{NL(L-1)/2+L(L-1)(2L-1)/6}\prod_{0\le i<j\le L-1}(q^{-k_j}-q^{-k_i}).
\eea
This kind of reduction is used systematically in Krattenthaler
\cite{Kra}.
Now, for $k_i$ near $N/2$, we have
\bea\notag
\det(G)\sim {q^{NL(L-1)/2+L(L-1)(2L-1)/6}\over(q;q)_\infty^L}\prod_{0\le i<j\le L-1}(q^{-k_j}-q^{-k_i}).
\eea
Substituting the above formula into \eqref{S-G} gives us
\bea\notag
\det(S)\sim{q^{{L(L-1)\over2}(2N+{7L-2\over6})}\over(q;q)_\infty^L}\det(T),
\eea
where $T$ is a matrix with $ij$-th entry
\begin{equation*}
  T_{ij}=\sum_{k_i=L}^N q^{k_i^2-k_i(N+L/2-1/2)-jk_i}(-k_i)_i ,
\qquad 0\le i,j\le L-1.
\end{equation*}
Let $m=\lfloor N/2\rfloor$. Shifting the index $k_i=k+m$ yields
\bea\notag
T_{ij}\sim q^{-m(N-m+L/2-1/2+j)}\sum_{k=-\infty}^\infty q^{k^2-k(N-2m+L/2-1/2+j)}(-k-m)_i.
\eea
Denote $\beta:=N-2m+(L-1)/2$. We have $\det(T)\sim q^{-mL(N-m+L-1)}\det(U)$,
where the $ij$-th entry of $U$ is given by
\bea\notag
U_{ij}=\sum_{k=-\infty}^\infty q^{k^2-k(\beta+j)}(-k-m)_i ,
\qquad 0\le i,j\le L-1.
\eea
By row operations, we obtain $\det(U)=(-1)^{L(L-1)/2}\det(R)$, where $R$ is a matrix with $ij$-th entry
\bea\notag
R_{ij}=\sum_{k=-\infty}^\infty q^{k^2-k(\beta+j)}(-k)_i(-1)^i=\Theta_i(q^{-\beta-j}) ,
\qquad 0\le i,j\le L-1.
\eea
Summarizing the above derivations, we have
\bea\notag
\widehat{Z}_{L\times N}&=&\det(S){(-1)^{L(L-1)\over2}q^{-LN^2-{LN\over2}-{L(L-1)\over2}(2N+{4L+1\over6})}\over\prod_{j=0}^{L-1}j!}
\\\notag&\sim&\det(R){q^{-LN^2-{LN\over2}+{L(L-1)^2\over4}-mL(N-m+L-1)}\over(q;q)_\infty^L\prod_{j=0}^{L-1}j!},
\eea
where a rigorous error estimation for more general situations has been given as in Theorem \ref{thm-Pnj}.
A further simplification gives us the result in \eqref{Zln-asy}.
\end{proof}

\subsection{Proof of \eqref{Zln-asy-l=2}}

A direct application of \eqref{Zln-asy} with $L=2$ gives
\bea \label{Zln-asy-l=2'}
q^{5N^2/2+2N}\widehat{Z}_{2\times N}
\sim \frac{\Theta(q^{\alpha-1/2})\Theta'(q^{\alpha-3/2})-
q\Theta'(q^{\alpha-1/2})\Theta(q^{\alpha-3/2})}{ q^{1-\alpha^2/2}(q;q)_\infty^2}.
\eea
Denote $q=e^{\pi i\tau}$ and $a=\pi\tau(\a/2-3/4)$.
Making use of the theta functions defined in Whittaker and Watson \cite[Chapter 21]{Whit:Wat}, we may rewrite
\bea
\Theta(q^{\alpha-3/2})&=&\sum_{k=-\infty}^\infty q^{k^2}e^{2kia}=\vartheta_3(a;q),\notag\\
\Theta(q^{\alpha-1/2})&=&\vartheta_3(a+\pi\tau/2;q)=q^{(1-\a)/2}\vartheta_2(a;q),\notag\\
\Theta'(q^{\alpha-3/2})&=&{q^{3/2-\a}\over 2i}\vartheta_3'(a;q),\notag\\
\Theta'(q^{\alpha-1/2})&=&{q^{1-3\a/2}\over 2i}[\vartheta_2'(a;q)-i\vartheta_2(a;q)].\notag
\eea
It then follows from \cite[Chapter 21]{Whit:Wat} and  \cite[Chapter 20]{DLMF} that
\bea\label{Th-Wronskian}
&&\Theta(q^{\alpha-1/2})\Theta'(q^{\alpha-3/2})-
q\Theta'(q^{\alpha-1/2})\Theta(q^{\alpha-3/2})
\\&=&{q^{2-3\a/2}\over 2i}[\vartheta_2(a;q)\vartheta_3'(a;q)-\vartheta_3(a;q)\vartheta_2'(a;q)+i\vartheta_3(a;q)\vartheta_2(a;q)]
\notag\\&=&{q^{2-3\a/2}\over 2i}[\vartheta_4(0;q)^2\vartheta_4(a;q)\vartheta_1(a;q)+i\vartheta_3(a;q)\vartheta_2(a;q)]
\notag\\&=&{q^{2-3\a/2}\vartheta_2(0;\sqrt q)\over 4i}[\vartheta_4(0;q)^2\vartheta_1(a;\sqrt q)+i\vartheta_2(a;\sqrt q)].
\notag
\eea
Recalling the Jacobi triple product identity (see \cite{And:Ask:Roy} or \cite[II.33]{Gas:Rah})
\bea \label{F-def}
\sum_{k=-\infty}^\infty q^{k^2/2} z^k = (q;q)_\infty(-z\sqrt{q};q)_\infty
(-\sqrt{q}/z;q)_\infty,
\eea
we have
\bea
\vartheta_1(a;\sqrt q)&=&-iq^{\a/2-5/8}(q;q)_\infty(q^{\a-1/2};q)_\infty(q^{3/2-\a};q)_\infty,\notag\\
\vartheta_2(a;\sqrt q)&=&q^{\a/2-5/8}(q;q)_\infty(-q^{\a-1/2};q)_\infty(-q^{3/2-\a};q)_\infty,\notag\\
\vartheta_4(0;q)&=&(q;q)_\infty(q;q^2)_\infty,\notag\\
\vartheta_2(0;\sqrt q)&=&q^{1/8}(q;q)_\infty(-q;q)_\infty(-1;q)_\infty.\notag
\eea
Substituting the above formulas into \eqref{Th-Wronskian} yields
\bea
&&\Theta(q^{\alpha-1/2})\Theta'(q^{\alpha-3/2})-
q\Theta'(q^{\alpha-1/2})\Theta(q^{\alpha-3/2})
={q^{3/2-\a}\over4}(q;q)_\infty^2(-q;q)_\infty(-1;q)_\infty
\notag\\&&\times[-(q;q)_\infty^2(q;q^2)_\infty^2(q^{\a-1/2};q)_\infty(q^{3/2-\a};q)_\infty
+(-q^{\a-1/2};q)_\infty(-q^{3/2-\a};q)_\infty],\notag
\eea
which, together with \eqref{Zln-asy-l=2'} gives \eqref{Zln-asy-l=2}.

\subsection{Another proof for the case $L=1$}

For the case $L=1$, \eqref{Zln-asy-l=1} is a simple application of \eqref{Zln-asy} and \eqref{F-def}.
However, there is another proof due to the simple structure of $\widehat{Z}_{1 \times N}$. This new proof is based on the integral representation of the partition $\widehat{Z}_{1 \times N}$.
For brevity, let us denote the summation in \eqref{F-def} by
\bea\notag
F(z):=\sum_{k=-\infty}^\infty q^{k^2/2} z^k = (q;q)_\infty(-z\sqrt{q};q)_\infty
(-\sqrt{q}/z;q)_\infty.
\eea
Cauchy's residue theorem gives us
\bea\notag
q^{k^2/2}={1\over2\pi i}\oint_C F(z){dz\over z^{k+1}},
\eea
where $C$ is a positively oriented contour around the origin. It then follows from the definition of $\widehat{Z}_{L\times N}$ in \eqref{eqdefZ} that
\bea\notag
q^{N^2+N/2}\widehat{Z}_{1\times N}=\sum_{k=0}^N \gauss{N}{k}q^{k^2-kN}
={1\over2\pi i}\oint_C\sum_{k=0}^N \gauss{N}{k}q^{k^2/2-kN}F(z){dz\over z^{k+1}}.
\eea
Using the $q$-binomial theorem
\bea
\label{eqqbinom}
(x;q)_n = \sum_{k=0}^n \gauss{n}{k} q^{\binom{k}{2}} (-x)^k,
\eea
(cf.  \cite{And:Ask:Roy}, \cite[(II.4)]{Gas:Rah}), we have
\bea
q^{N^2+N/2}\widehat{Z}_{1\times N}={1\over2\pi i}\oint_C F(z)(-q^{-N+1/2}/z;q)_N{dz\over z}.
\eea
Let us introduce a change of variable $z=q^{-m}u$ with $m=\lfloor N/2\rfloor$.
Since
\bea \label{F-qm-u}
F(q^{-m}u)=\sum_{k=-\infty}^\infty q^{k^2/2-km} u^k=F(u)q^{-m^2/2}u^m,
\eea
and
\bea\notag
(-q^{-N+1/2+m}/u;q)_N
&=&(-q^{-N+1/2+m}/u;q)_m(-q^{-N+1/2+2m}/u;q)_{N-m}
\\\notag&=&q^{-Nm+3m^2/2}u^{-m}(-q^{N+1/2-2m}u;q)_m(-q^{-N+1/2+2m}/u;q)_{N-m},
\eea
we obtain
\bea\notag
q^{N^2+N/2}\widehat{Z}_{1\times N}
&=&{q^{-Nm+m^2}\over2\pi i}\oint_C F(u)(-q^{N+1/2-2m}u;q)_m(-q^{-N+1/2+2m}/u;q)_{N-m}{du\over u}
\\\notag&\sim&{q^{-Nm+m^2}\over2\pi i}\oint_C F(u)(-q^{N+1/2-2m}u;q)_\infty(-q^{-N+1/2+2m}/u;q)_\infty{du\over u}.
\eea
Note that
\bea\notag
F(q^{N-2m}u)=(q;q)_\infty(-q^{N+1/2-2m}u;q)_\infty(-q^{-N+1/2+2m}/u;q)_\infty,
\eea
it then follows that
\bea\notag
(q;q)_\infty q^{N^2+N/2+Nm-m^2}\widehat{Z}_{1\times N}\sim{1\over2\pi i}\oint_C F(u)F(q^{N-2m}u){du\over u}.
\eea
The right-hand side equals to the constant term of the product $F(u)F(q^{N-2m}u)$:
\bea\notag
\sum_{k=-\infty}^\infty q^{(-k)^2/2}q^{k^2/2}q^{(N-2m)k}
=\Theta(q^{N-2m})=\Theta(q^{2m-N}).
\eea
Thus, the above two formulas give us
\bea
\label{Tierz}
q^{5N^2/4+N/2}\widehat{Z}_{1\times N}\sim{q^{\alpha^2/4}\Theta(q^\alpha)\over(q;q)_\infty},
\eea
where $\alpha=2m-N$. Moreover, if $N=2m$ is even, we have
$$\Theta(q^{2m-N})=(q^2;q^2)_\infty(-q;q^2)_\infty^2,$$
and if $N=2m+1$ is odd
$$\Theta(q^{2m-N})=(q^2;q^2)_\infty(-1;q^2)_\infty(-q^2;q^2)_\infty.$$
Then, \eqref{Zln-asy-l=1} immediately follows from a combination of the above three formulas.

\section{Asymptotics of $q$-polynomials in the non-oscillatory interval} \label{sec:asy-qop-outer}


Theorem \ref{thm-Pnj} gives asymptotics of $P_{n,j}(x)$ in the oscillatory interval:
\bea
{\ln x\over n\ln q}\in(-2,0).
\eea
To make our asymptotic technique complete, we further provide asymptotics in the interval where $P_{n,j}(x)$ is non-oscillatory. To this end, we shall introduce the generalized theta function:
\bea \label{def-Phi}
\Phi(z):=\sum_{k=0}^\infty a_kq^{k^2}z^k
\eea
and the associated functions:
\bea \label{def-Phi-j}
\Phi_j(z):=z^j\Phi^{(j)}(z)=\sum_{k=0}^\infty a_kq^{k^2}z^k(-k)_j(-1)^j.
\eea
Here, $\{a_k\}_{k\ge0}$ is a sequence with uniform bound.
When $a_k\equiv1$, we have $\Phi(z)+\Phi(1/z)=\Theta(z)+1$, where $\Theta(z)$ is given in \eqref{theta-q-def}. It is worth pointing out that, when $a_k=(-1)^k/(q;q)_k$,
the function $\Phi(z)$ is the same as the Ramanujan function (i.e., the $q$-Airy function); see \cite{Ism:Zhang2007}.

\begin{thm}\label{thm-outer}
Assume that $f_n(k)$ is uniformly bounded for $n\ge0$ and $0\le k\le n$, and there exist uniformly bounded sequence $\{a_k\}_{k\ge0}$ and $\d\in(0,1)$ such that
\begin{equation}
  \sup_{0\le k\le n\d}|f_n(k)-a_k|\le\ep(n,\d)=o(1)
\end{equation}
as $n\to\infty$.
  Let $d=\lfloor n\d \rfloor$ and $M$ be a fixed large number. Then, for the functions $P_{n,j}(x)$ given in \eqref{pn-j-def},
  we have
\bea
P_{n,j}(q^{nt}y)=\Phi_j(-q^{nt}y)+O(\ep(n,\d))+O(q^{d^2}M^d),
\eea
uniformly for $|y|\le M$ and $t\ge0$.
\end{thm}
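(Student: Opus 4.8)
The plan is to mimic the proof of Theorem~\ref{thm-Pnj}, but now to exploit the fact that for $t\ge0$ the scaled argument is bounded (indeed $|q^{nt}y|\le M$, since $q^{nt}\le1$ and $|y|\le M$), so that the polynomial is dominated by its low-order terms and the limiting object is the one-sided series $\Phi_j$ rather than the two-sided theta function $X_{j,m}$ of the oscillatory regime. Concretely, I would start from
$$P_{n,j}(q^{nt}y)=\sum_{k=0}^n q^{k^2}f_n(k)(-q^{nt}y)^k(-k)_j(-1)^j$$
and split the sum at the cutoff $k=d=\lfloor n\d\rfloor$, writing $P_{n,j}(q^{nt}y)=A_1+A_2+B$ with $A_1=\sum_{k=0}^d a_k q^{k^2}(-q^{nt}y)^k(-k)_j(-1)^j$, $A_2=\sum_{k=0}^d [f_n(k)-a_k]q^{k^2}(-q^{nt}y)^k(-k)_j(-1)^j$, and $B=\sum_{k=d+1}^n f_n(k)q^{k^2}(-q^{nt}y)^k(-k)_j(-1)^j$.

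For $A_1$, completing the sum to infinity reproduces $\Phi_j(-q^{nt}y)$ up to the tail $\sum_{k>d}a_kq^{k^2}(-q^{nt}y)^k(-k)_j(-1)^j$, which—using $|a_k|\le C$, $|q^{nt}y|\le M$, and $|(-k)_j(-1)^j|=k(k-1)\cdots(k-j+1)\le k^j$—is bounded by $C\sum_{k>d}q^{k^2}M^kk^j$. For $A_2$, the hypothesis $\sup_{0\le k\le n\d}|f_n(k)-a_k|\le\ep(n,\d)$ gives $|A_2|\le\ep(n,\d)\sum_{k=0}^\infty q^{k^2}M^kk^j=O(\ep(n,\d))$, the full series converging because the Gaussian factor $q^{k^2}$ crushes $M^kk^j$. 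For $B$, the uniform bound on $f_n(k)$ yields $|B|\le C\sum_{k>d}q^{k^2}M^kk^j$, the same tail encountered in $A_1$. Since every estimate uses only $|q^{nt}y|\le M$, all bounds hold uniformly for $|y|\le M$ and $t\ge0$.

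The crux is therefore the single tail estimate $\sum_{k\ge d}q^{k^2}M^kk^j=O(q^{d^2}M^d)$, which is exactly the computation carried out for the bound on $I_1$ in the proof of Theorem~\ref{thm-Pnj}: shifting $k\mapsto k+d$ factors out $q^{d^2}M^d$ and leaves $\sum_{l\ge0}q^{l^2+2ld}M^l(d+l)^j$, whose leading ($l=0$) contribution is $d^j$ while the terms $l\ge1$ are exponentially suppressed by $q^{2ld}$. The one genuine subtlety to flag is this polynomial factor $d^j$ (equivalently $n^j$, as it appears explicitly in Theorem~\ref{thm-Pnj}): since $q^{d^2}$ decays super-exponentially in $d$ while $d^j$ grows only polynomially, one has $q^{d^2}M^dd^j=O(q^{(1-\eta)d^2}M^d)$ for every $\eta>0$, so the polynomial weight is absorbed and the error is legitimately recorded as $O(q^{d^2}M^d)$. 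Assembling $A_1+A_2+B$ then yields $P_{n,j}(q^{nt}y)=\Phi_j(-q^{nt}y)+O(\ep(n,\d))+O(q^{d^2}M^d)$, as claimed.
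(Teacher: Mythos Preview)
Your proof is correct and follows essentially the same route as the paper: the paper writes $P_{n,j}(q^{nt}y)-\Phi_j(-q^{nt}y)=I_1+I_2$ with $I_2$ equal to your $A_2$ and $I_1$ combining your $B$ with the tail left over from completing $A_1$ to $\Phi_j$, and then bounds the common tail $\sum_{k\ge d}q^{k^2}M^k(-k)_j(-1)^j$ exactly as you do via the shift $k\mapsto k+d$. Your explicit remark that the stray polynomial factor $d^j$ is absorbed by the super-exponential decay of $q^{d^2}$ is in fact more careful than the paper's justification, which appeals to ``uniform boundedness of $q^{2kd}(k+d+j)^j$ for $k\ge0$ and $d\ge0$''---a claim that fails at $k=0$ but whose intended conclusion is rescued precisely by your observation.
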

\begin{proof}
We split the error term into two sums:
\bea\notag
P_{n,j}(q^{nt}y)-\Phi_j(-q^{nt}y)=I_1+I_2,
\eea
where
  \bea\notag
  I_1=\sum_{k=d}^nq^{k^2}f_n(k)(-q^{nt}y)^k(-k)_j(-1)^j-\sum_{k=d}^\infty a_kq^{k^2}(-q^{nt}y)^k(-k)_j(-1)^j=O(q^{d^2}M^d),
  \eea
  and
  \bea\notag
  I_2=\sum_{k=0}^{d-1}q^{k^2}[f_n(k)-a_k](-q^{nt}y)^k(-k)_j(-1)^j=O(\ep(n,\d)).
  \eea
Here, we have used the estimation:
\bea\notag
\sum_{k=d}^\infty q^{k^2}M^k(-k)_j(-1)^j&=&\sum_{k=0}^\infty q^{k^2+2kd+d^2}M^{k+d}(-k-d)_j(-1)^j
\\&\le& q^{d^2}M^d\sum_{k=0}^\infty q^{k^2}q^{2kd}(k+d+j)^j=O(q^{d^2}M^d)
\eea
due to uniform boundedness of $q^{2kd}(k+d+j)^j$ for $k\ge0$ and $d\ge0$.
This completes the proof.
\end{proof}

To conduct asymptotic analysis of $P_{n,j}(q^{nt}y)$ with $t\le-2$, we set $t=-(2+s)$ with $s\ge0$ and change the index $k$ to $n-k$.
It follows that
\bea\notag
P_{n,j}(q^{nt}y)=(-q^{n+ns}/y)^{-n}\sum_{k=0}^nq^{k^2}f_n(n-k)(-q^{ns}/y)^k(-n+k)_j(-1)^j.
\eea
We shall make use of the following special function
\bea
\Psi_{j,n}(z)=\sum_{k=0}^n a_kq^{k^2}z^k(-n+k)_j(-1)^j.
\eea
Note that $\Psi_{0,n}(z)=\Phi_0(z)=\Phi(z)$ with $\Phi(z)$ defined in \eqref{def-Phi}.
Moreover, when $a_k\equiv1$, we have the relation $\Psi_{j,n}(z)=(-1)^jX_{j,j-n-1}(z)$ with $X_{j,m}(z)$ given in \eqref{X-jm-def}.
\begin{thm}\label{thm-outer-2}
Assume that $f_n(k)$ is uniformly bounded for $n\ge0$ and $0\le k\le n$, and there exist uniformly bounded sequence $\{a_k\}_{k\ge0}$ and $\d\in(0,1)$ such that
\begin{equation}
  \sup_{0\le k\le n\d}|f_n(n-k)-a_k|\le\ep(n,\d)=o(n^{-j})
\end{equation}
as $n\to\infty$.
  Let $d=\lfloor n\d \rfloor$ and $M$ be a fixed large number.
  Then, for the functions $P_{n,j}(x)$ given in \eqref{pn-j-def},
  we have
\bea
\qquad \quad P_{n,j}(q^{nt}y)=(-q^{n+nt}y)^n[\Psi_{j,n}(-q^{-2n-nt}/y)+O(n^j\ep(n,\d))+O(q^{d^2}M^dn^j)],
\eea
uniformly for $|y|\ge 1/M$ and $t\le-2$.
\end{thm}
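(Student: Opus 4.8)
The plan is to mirror the proof of Theorem \ref{thm-outer}, but applied after the index reflection $k\mapsto n-k$ that is carried out in the display immediately preceding the statement. Rewriting that identity in the notation of the theorem, I would start from
\bea\notag
P_{n,j}(q^{nt}y)=(-q^{n+nt}y)^n\sum_{k=0}^n q^{k^2}f_n(n-k)\left(-q^{-2n-nt}/y\right)^k(-n+k)_j(-1)^j,
\eea
set $z=-q^{-2n-nt}/y$, and compare the inner sum with $\Psi_{j,n}(z)=\sum_{k=0}^n a_kq^{k^2}z^k(-n+k)_j(-1)^j$. The entire error is then controlled by the single sum
\bea\notag
\sum_{k=0}^n q^{k^2}[f_n(n-k)-a_k]\,z^k(-n+k)_j(-1)^j,
\eea
so the task reduces to bounding this difference uniformly.

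Before splitting the sum, I would record two preparatory bounds. First, and this is the observation that makes the scheme work in this regime, the constraint $t\le-2$ forces $-2n-nt=-n(2+t)\ge0$, hence $q^{-2n-nt}\le1$; combined with $|y|\ge1/M$ this yields $|z|\le M$ uniformly. This is the analogue here of the hypothesis $|y|\le M$ in Theorem \ref{thm-outer}, and it is exactly what guarantees convergence of the series $\sum_k q^{k^2}|z|^k$ and identifies $\Psi_{j,n}$ as the correct comparison object. Second, I would note the uniform estimate $|(-n+k)_j(-1)^j|\le n^j$ for all $0\le k\le n$ and large $n$, since every factor of the product $(-n+k)(-n+k+1)\cdots(-n+k+j-1)$ has absolute value at most $n$; this accounts for the $n^j$ loss appearing in both error terms.

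With these in hand the estimate splits at $k=d=\lfloor n\delta\rfloor$ exactly as in Theorem \ref{thm-outer}. On the head $0\le k\le d-1$ the hypothesis gives $|f_n(n-k)-a_k|\le\ep(n,\delta)$, and bounding $|z|^k\le M^k$, $|(-n+k)_j|\le n^j$, and $\sum_k q^{k^2}M^k=O(1)$ produces a contribution of size $O(n^j\ep(n,\delta))$. On the tail $d\le k\le n$ I would use only uniform boundedness of $f_n$ and $a_k$, together with the shifted tail bound $\sum_{k\ge d}q^{k^2}M^k=O(q^{d^2}M^d)$ (shift $k\mapsto k+d$ and discard $q^{2kd}\le1$) and the same factor $n^j$, to obtain $O(q^{d^2}M^dn^j)$. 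Summing the two pieces and reinstating the prefactor $(-q^{n+nt}y)^n$ gives the claimed formula, uniformly for $|y|\ge1/M$ and $t\le-2$. The computation is routine once the two preparatory bounds are established, so I do not anticipate a serious obstacle; the only point demanding genuine care is the verification that the admissible range $t\le-2$ is precisely what forces $|z|\le M$, since outside this range the tail estimate fails and $\Psi_{j,n}$ ceases to be the correct limiting function.
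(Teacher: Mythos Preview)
Your proposal is correct and follows essentially the same approach as the paper: both start from the index-reflected representation, split the error at $k=d$, and bound the head via the hypothesis on $f_n(n-k)-a_k$ and the tail via the Gaussian decay $\sum_{k\ge d}q^{k^2}M^k=O(q^{d^2}M^d)$. Your version is in fact marginally cleaner, since by comparing directly with the finite sum $\Psi_{j,n}$ you can use the uniform bound $|(-n+k)_j|\le n^j$ on $0\le k\le n$, whereas the paper extends the $a_k$-sum to infinity and therefore needs the growing bound $|(-n+k)_j|\le(n+d)^j(1+k-d)^j$ for $k\ge d$; your explicit verification that $t\le-2$ forces $|z|\le M$ is also a helpful clarification that the paper leaves implicit.
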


\begin{proof}
Again, we split the error term into two sums:
\bea\notag
(-q^{n+ns}/y)^nP_{n,j}(q^{nt}y)-\Phi_j(-q^{ns}/y)=I_1+I_2,
\eea
where
  \bea\notag
  I_1=\sum_{k=d}^nq^{k^2}f_n(n-k)(-q^{ns}/y)^k(-n+k)_j(-1)^j-\sum_{k=d}^\infty a_kq^{k^2}(-q^{ns}/y)^k(-n+k)_j(-1)^j,
  \eea
  and
  \bea\notag
  I_2=\sum_{k=0}^{d-1}q^{k^2}[f_n(n-k)-a_k](-q^{ns}/y)^k(-n+k)_j(-1)^j=O(n^j\ep(n,\d)).
  \eea
  Since $|(-n+k)_j|\le (n+d)^j(1+k-d)^j$ for all $k\ge d$, we have
  \bea\notag
  \sum_{k=d}^\infty q^{k^2}M^k|(-n+k)_j|
  \le \sum_{k=0}^\infty q^{k^2+2kd+d^2}M^{k+d}(1+k)^j(n+d)^j=O(q^{d^2}M^dn^j).
  \eea
Thus, $I_1=O(q^{d^2}M^dn^j)$.
This completes the proof.
\end{proof}

\section*{Acknowledgements}

D.D was partially supported by a grant from the City
University of Hong Kong (Project No. 7004864), and grants from the Research Grants Council of the Hong Kong Special Administrative Region, China (Project No. CityU 11300115, CityU 11303016).


\noindent D.D,  City University of Hong Kong, Tat Chee Avenue, Kowloon Tong, Hong Kong \\
email: dandai@cityu.edu.hk

\medskip

\noindent M.E.H.I,
University of Central Florida, Orlando, FL 32816, USA \\
  email: mourad.eh.ismail@gmail.com

  \medskip

  \noindent X.S.W,  University of Louisiana at Lafayette, Lafayette, LA 70503, USA     \\
  email: xswang@louisiana.edu

\end{document}